\documentclass[dvipsnames,letterpaper, 10 pt, journal]{IEEEtran}

\IEEEoverridecommandlockouts	

\usepackage{cite}
\usepackage{amsmath,amssymb,amsfonts,amsthm}
\usepackage{algorithmic}
\usepackage{textcomp}
\usepackage{graphicx,color}
\usepackage{mathrsfs}
\usepackage{tikz}
\usepackage[vlined,ruled]{algorithm2e}
\usepackage{subfigure}
\usepackage{url}
\usepackage{color}
\usepackage{dsfont}
\usepackage{bbm}
\usepackage{booktabs}
\usepackage{array}
\usepackage{yfonts}
\usepackage[normalem]{ulem}
\usepackage{arydshln,leftidx,mathtools}
\usepackage{multicol}
\usepackage{amsmath}
\usepackage{stfloats}
\usepackage{comment}
\usepackage{subfigure}
\usepackage{manyfoot}
\usepackage{enumitem}
\usetikzlibrary{arrows.meta} 

\definecolor{myBlue}{RGB}{49,130,189}

\newtheorem{theorem}{Theorem}[section]

\newtheorem{proposition}[theorem]{Proposition}
\newtheorem{problem}{Problem}

\usepackage{tabularx}
\usepackage{multirow}
\usepackage{makecell}

\newcommand\aamsout{\bgroup\markoverwith{\textcolor{violet}{\rule[0.5ex]{2pt}{1pt}}}\ULon}

\newcommand{\mc}{\mathcal}

\DeclareSymbolFont{bbold}{U}{bbold}{m}{n}
\DeclareSymbolFontAlphabet{\mathbbold}{bbold}

\newcommand{\R}{\mathbb{R}}

\newcommand{\N}{\mathbb{N}}

\newcommand\oprocendsymbol{\hbox{$\square$}}
\newcommand\oprocend{\relax\ifmmode\else\unskip\hfill\fi\oprocendsymbol}

\renewcommand{\theenumi}{(\roman{enumi})}

\newcommand*{\QEDA}{\hfill\ensuremath{\blacksquare}}%

\DeclareNewFootnote{R}[roman]

\graphicspath{{figs/}}

\makeatletter
\let\NAT@parse\undefined
\makeatother
\usepackage[colorlinks,urlcolor=blue, linkcolor=blue, citecolor=blue]{hyperref}

\begin{document}

\title{\LARGE \bf Hybrid-time PINNs: Discontinuity-Aware Physics-Informed Neural Networks for Learning Hybrid Dynamical Systems}

\author{Shivanshu~Tripathi, and Maziar~Raissi %
\thanks{This material is based upon work supported in part by award NSF-CMMI-2618914. S.~Tripathi is with the Department of Electrical and Computer Engineering, and M.~Raissi is with the Department of Mathematics, University of California, Riverside, Riverside, CA 92521, USA.
Emails: \href{mailto:strip008@ucr.edu}{\texttt{strip008@ucr.edu}} and
\href{mailto:maziarr@ucr.edu}{\texttt{maziarr@ucr.edu}}.}
}

\maketitle
\thispagestyle{empty}
\pagestyle{empty}
\suppressfloats[t] 

\begin{abstract}
Hybrid dynamical systems flow continuously but jump at discrete events. Physics-informed neural networks (PINNs) fit data while enforcing known dynamics, but a single network is smooth in time and cannot model these jumps. We show that the more closely a PINN follows a jump, the larger its physics loss must be. We propose hybrid-time PINNs (HT-PINNs), which assign one network to each flow interval of the hybrid time domain and train the networks jointly with the jump times and the unknown parameters. The number of jumps and the sequence of modes are found by enumeration. HT-PINNs recover the number of jumps and the restitution coefficient of a bouncing ball. In a MuJoCo simulation of robotic disassembly, where a robot pulls a press-fitted pin out of a housing through a spring, HT-PINNs find the number of jumps and the sequence of modes, and estimate the static and kinetic friction coefficients with small errors of $0.1\%$ and $1.7\%$.
\end{abstract}

\section{Introduction}
\label{sec:intro}

Hybrid dynamical systems combine continuous flows with instantaneous jumps~\cite{RG-RGS-ART:12}. They arise in robotics, where legged robots make and break contact with the ground~\cite{PMW-MP-YH-AE-NM-ADP:23}; in biology, where fireflies synchronize their flashes~\cite{REM-SHS:90}; and in power systems, where switching events change the network~\cite{IAH-MAP:00}. Between jumps, the state follows a differential equation; when it reaches a guard, a jump resets it and may change the governing differential equation (Fig.~\ref{fig:teaser}).

Models of hybrid systems are rarely known exactly: the modes, guards, and resets often follow from physics, but parameters such as friction and restitution coefficients are uncertain and depend on the environment. In robotic disassembly~\cite{MEA-AR-RS:24}, for example, the friction between press-fitted parts changes with wear and corrosion. A model must therefore combine the known physics with the measurements and reproduce the jumps, whose number, times, and mode sequence are unknown.

Our objective is to model such systems with HT-PINNs. A classical PINN~\cite{MR-PP-GEK:19} represents a trajectory by a network trained to fit the measurements while satisfying the known differential equations, which yields the unmeasured states and the unknown parameters. A PINN, however, is smooth in time and cannot represent a jump. An HT-PINN is instead defined on a hybrid time domain, which indexes the trajectory by the time and the number of jumps. It uses one network per interval between consecutive jumps, trains the networks jointly with the jump times and the parameters, and selects the number of jumps and the mode sequence by enumeration.

\textbf{Related work.} Methods for learning models of dynamical systems are either data-driven, learning the entire model from data, or physics-informed, combining the data with known physics.
Among data-driven methods, hybrid system identification fits switched and piecewise affine models in discrete time~\cite{SP-ALJ-GFT-RV:07,FL-GB:19} or learns the equations of each mode by sparse regression~\cite{NMM-TA-SLB-JNK-JLP:19}, neural event functions~\cite{RTQC-BA-MN:21} learn state-triggered events of neural ordinary differential equations, and neural hybrid automata~\cite{MP-SM-LS-SC-SJO-AY-HA-JP-AG:21} learn the flows of an unknown number of modes and the stochastic jumps between them. Since these methods do not impose the known physics, their models need not obey it. On the other hand, PINNs~\cite{MR-PP-GEK:19,GEK-IGK-LL-PP-SW-LY:21} impose the known physics and have been used as models for control~\cite{EAA-EC-LOS-JPJ-ERDS-JFH:24}, but cannot represent a jump~\cite{MP-MH-MP:21}.
Extensions of PINNs train separate networks on fixed subdomains while keeping the solution continuous across their interfaces~\cite{ADJ-GEK:20,BM-AM-TNM:23}, or capture shocks in partial differential equations~\cite{GL-DE-XM:26}. In the recent work~\cite{BH-CH-MZ-RW:26}, the authors align a PINN with the fault and clearing events of a power system and chain the state across these events. In~\cite{YB-CT-JL-XW-ZZ:26}, the authors use PINN residuals to locate the change points of the parameters of an ordinary differential equation. Differently from~these works, we consider jumps that are triggered by the state through guards and may reset it. For hybrid systems whose flow and jump maps are known up to parameters, the estimator in~\cite{RSJ-SDC-RGS:24} assumes that the parameters enter linearly, the state is measured, and the jumps are detected. In contrast, our work addresses a more general setting, in which the parameters may enter nonlinearly, only an output is sampled, and neither the number nor the times of the jumps are known. However, we do restrict our analysis to hybrid systems whose modes, guards, and resets are known up to parameters.
\begin{figure}[t]
  \centering
  \includegraphics[width=\linewidth]{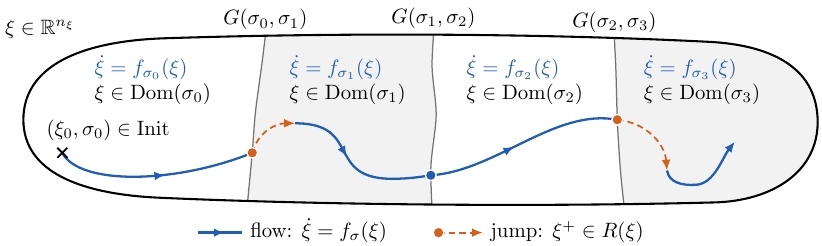}
\caption{Solution of $\mathcal{H}$ flows within a mode and switches modes at guards.}
  \label{fig:teaser}
\end{figure}

\noindent
\textbf{Contributions.} The main contributions of this letter are as follows. First, we design HT-PINNs to model a hybrid system known up to parameters from measurements. This modeling problem is not standard, since the number and times of the jumps and the sequence of modes are unknown. Second, we show that a single smooth network cannot follow a jump closely without a large physics loss. Third and finally, we illustrate the effectiveness of HT-PINNs on a bouncing ball and on a MuJoCo~\cite{ET-TE-YT:12} simulation of pin extraction, where they recover the stick, slip, and release events within $1.1$~ms and the friction coefficients with errors of $0.1\%$ and $1.7\%$.

\noindent
\textbf{Notation.} Let $\R_{\ge0}$ and $\N=\{0,1,\dots\}$ denote the nonnegative reals and integers, respectively. The Euclidean norm and the cardinality of a set $\mc{S}$ are denoted by $|\cdot|$ and $|\mc{S}|$, respectively, and $[a]_+=\max\{a,0\}$. For $\alpha\in\R^n$, $\mathrm{softmax}(\alpha)$ denotes the vector with entries $e^{\alpha_j}/\sum_{i}e^{\alpha_i}$, and $\operatorname{dom}\phi$ denotes the domain of a function~$\phi$.

\section{Hybrid Systems and Problem Formulation}
\label{sec:prelim}


\label{sec:hs}

We model a hybrid system as:
\begin{equation}
\mathcal{H}:\ \begin{cases}
  \dot x \in F(x), & x \in C,\\
  x^+ \in G(x), & x \in D,
\end{cases}
\qquad y = h(x),
\label{eq:hs}
\end{equation}
where $x$ is the state and $y$ is the output. While $x$ is in the {flow set} $C$, it changes continuously according to the {flow map} $F$. When $x$ is in the {jump set} $D$, it can jump instantly to a new value $x^+$ given by the {jump map} $G$. The {output map} $h$ gives the measured output.

The state is $x=(\xi,\sigma)$, where $\xi\in\R^{n_\xi}$ contains the physical variables, such as positions and velocities, and the {mode} $\sigma$ takes values in a finite set $\Sigma\subset\N$ and changes only at jumps. Each mode $\sigma$ has {guard functions} $\gamma_{\sigma,i}(\xi)$, $i\in I(\sigma)$, which decide when the state flows and when it jumps:
\begin{equation}
\begin{aligned}
  C &= \{(\xi,\sigma) : \gamma_{\sigma,i}(\xi) \ge 0 \text{ for all } i \in I(\sigma)\},\\
  D &= \{(\xi,\sigma) : \gamma_{\sigma,i}(\xi) \le 0 \text{ for some } i \in I(\sigma)\}.
\end{aligned}
\label{eq:guards}
\end{equation}
In mode $\sigma$, the physical variables follow the vector field $f_\sigma$. A jump triggered by guard $i$ updates the physical variables with the reset map $R_{\sigma,i}$ and switches the mode to $\nu(\sigma,i)\in\Sigma$. The flow and jump maps are thus
\begin{equation}
\begin{aligned}
F(x) &= \{(f_\sigma(\xi),\,0)\},\\
G(x) &= \{(R_{\sigma,i}(\xi),\,\nu(\sigma,i)) : i\in I(\sigma),\ \gamma_{\sigma,i}(\xi)\le 0\}.
\end{aligned}
\label{eq:modewise}
\end{equation}

{\bf Solution of $\mathcal H$}: A function $\phi=(\xi,\sigma)$ on domain $\cup_{j=0}^{J}[\tau_j,\tau_{j+1}]\times\{j\}\subset \mathbb{R}_{\geq 0} \times \mathbb{N}$, with jump times $0=\tau_0\le\dots\le\tau_{J+1}=T,$  is a solution of $\mathcal{H}$ on $[0,T]$ if
(i)~for each $j$ with $\tau_j<\tau_{j+1}$, $t\mapsto\phi(t,j)$ is continuously differentiable on $[\tau_j,\tau_{j+1}]$, and $\phi(t,j)\in C$ and $\dot\phi(t,j)\in F(\phi(t,j))$ for $t\in(\tau_j,\tau_{j+1})$; and
(ii)~for $j=1,\dots,J$, $\phi(\tau_j,j-1)\in D$ and $\phi(\tau_j,j)\in G(\phi(\tau_j,j-1))$.
Jump~$j$ is triggered by guard~$i$ if $\gamma_{\sigma,i}(\xi)\le0$ and $\phi(\tau_j,j)=(R_{\sigma,i}(\xi),\nu(\sigma,i))$, where $(\xi,\sigma)=\phi(\tau_j,j-1)$.

{\bf Problem setting:} We assume that the modes $\Sigma$, guard indices $I(\sigma)$, transitions $\nu$, and output
map $h$ are known, and $y=h(\xi)$. The vector fields $f_\sigma(\xi;\theta)$, the guard functions
$\gamma_{\sigma,i}(\xi;\theta)$, and the reset maps $R_{\sigma,i}(\xi;\theta)$ are known up
to a parameter $\theta\in\Theta$, which gives a known family
$\{\mathcal H_\theta\}_{\theta\in\Theta}$ of hybrid systems.
The data come from a solution $\phi$ of $\mathcal H_{\theta^\star}$ on
$[0,T]$, with finitely many jumps, and unknown 
$\theta^\star\in\Theta$, whose output is sampled
at times $0\le t_1<\dots<t_K\le T$:
\begin{equation}
y_k = h\big(\xi(t_k,j_k)\big)+\zeta_k,\qquad k=1,\dots,K,
\label{eq:meas}
\end{equation}
where $j_k$ is the number of jumps of $\phi$ up to time $t_k$ and $\zeta_k$ is 
measurement noise. In our setting, we also do not measure the mode and the 
jump times. 

\begin{problem}\label{prob:main}
Given the family $\{\mathcal H_\theta\}_{\theta\in\Theta}$ and the data
$\{(t_k,y_k)\}_{k=1}^{K}$, find a parameter $\theta\in\Theta$ {and} a
solution $\hat\phi=(\hat\xi,\hat\sigma)$ of
$\mathcal H_{\theta}$ on $[0,T]$, represented by neural networks, 
whose output fits the data:
\begin{equation}
h\big(\hat\xi(t_k,\hat j_k)\big)\approx y_k,\qquad k=1,\dots,K,
\label{eq:match}
\end{equation}
where $\hat j_k$ is the number of jumps of $\hat\phi$ up to time $t_k$.
The initial state $\hat\phi(0,0)$, the number of jumps of $\hat\phi$ and the 
jump times are unknown and are found jointly with $\theta$.
\end{problem}

Requiring $\hat\phi$ to be a solution of $\mathcal H_{\theta}$ makes the
neural network physics-informed as it must obey the known physics of the hybrid system,
not just fit the data. 


\section{Hybrid-Time Physics-Informed Neural Network}
\label{sec:method}


\subsection{Known physics}
\label{sec:physics}

Between jumps, a solution of $\mathcal H_\theta$ satisfies:
\begin{itemize}
  \setlength{\itemsep}{2pt}%
  \setlength{\parsep}{0pt}%
  \setlength{\parskip}{0pt}%
  \item[(P1)] \emph{Flow:} the mode $\sigma$ stays constant and
    $\dot\xi=f_\sigma(\xi;\theta)$.
  \item[(P2)] \emph{Flow condition:} $\gamma_{\sigma,i}(\xi;\theta)\ge 0$ for all $i\in I(\sigma)$.
\end{itemize}
At a jump, triggered by guard $i$ it satisfies:
\begin{itemize}
  \setlength{\itemsep}{2pt}%
  \setlength{\parsep}{0pt}%
  \setlength{\parskip}{0pt}%
  \item[(P3)] \emph{Jump condition:}  $\gamma_{\sigma,i}(\xi;\theta)\le 0$.
  \item[(P4)] \emph{Reset:} the jump changes the state abruptly from $(\xi,\sigma)$ to
    $\big(R_{\sigma,i}(\xi;\theta),\,\nu(\sigma,i)\big)$.
\end{itemize}
Rules (P2) and (P3) act together. A flow interval ends with all guards
nonnegative, and a jump needs the triggering guard to be nonpositive, 
so this guard is zero at the jump time. This is used to locate
the jumps. 

\subsection{Single network cannot learn hybrid systems}
\label{sec:surrogate}

A standard PINN represents the solution by a single neural network $\psi(t)$.
Such a network is a continuous function of time and cannot represent a jump.
We say that $\psi$ is \emph{$\varepsilon$-close} to a solution $\phi$ if every point of $\phi$ is within $\varepsilon$ of $\psi$ in both time and state: for every $(t,j)\in\operatorname{dom}\phi$ there is an $s\in[0,T]$ with $|s-t|\le\varepsilon$ and $|\psi(s)-\xi(t,j)|\le\varepsilon$. We say that $\phi$ \emph{jumps by $\rho$ at time $t^\star$} if its $j$-th jump, for some $j$, occurs at $t^\star$ and changes the physical variables by $\rho=|\xi(t^\star,j)-\xi(t^\star,j-1)|$.

\begin{proposition}[Accuracy versus physical consistency at a jump]
\label{prop:smooth}
Let a solution $\phi$ of $\mathcal H$ jump by $\rho>0$ at time $t^\star$, and let $\psi$ be a continuously differentiable function that is $\varepsilon$-close to $\phi$ for some $\varepsilon<\rho/2$. Then, for every mode $\sigma$, the physics residual of $\psi$ near the jump satisfies
\[
\int_{I_\varepsilon}\bigl|\dot\psi-f_\sigma(\psi)\bigr|^2dt \ge \frac{\big[\rho-2\varepsilon(1+M_\sigma)\big]_+^2}{2\varepsilon},
\]
where $I_\varepsilon$ is the set of $t\in[0,T]$ with $|t-t^\star|\le\varepsilon$, and $M_\sigma$ is the largest value of $|f_\sigma(\psi(t))|$ on $I_\varepsilon$.
\end{proposition}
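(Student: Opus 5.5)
The plan is to use $\varepsilon$-closeness at the two endpoints of the jump, turn the state gap $\rho$ into a lower bound on the displacement of $\psi$ over a short time window, and then apply Cauchy--Schwarz to the residual. Let the jump be the $j$-th one, and set $\xi^-=\xi(t^\star,j-1)$ and $\xi^+=\xi(t^\star,j)$, so that $|\xi^+-\xi^-|=\rho$. Both $(t^\star,j-1)$ and $(t^\star,j)$ lie in $\operatorname{dom}\phi$. By $\varepsilon$-closeness there are times $s_-,s_+\in[0,T]$ with $|s_\pm-t^\star|\le\varepsilon$, $|\psi(s_-)-\xi^-|\le\varepsilon$ and $|\psi(s_+)-\xi^+|\le\varepsilon$. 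Both times lie in $I_\varepsilon$, and the triangle inequality gives
\[
|\psi(s_+)-\psi(s_-)|\ge \rho-2\varepsilon .
\]
Since $\varepsilon<\rho/2$, this is positive, so $s_+\ne s_-$. I would label the two times $a<b$; they satisfy $b-a\le 2\varepsilon$ and $[a,b]\subset I_\varepsilon$.

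Next, fix a mode $\sigma$ and write $r(t)=\dot\psi(t)-f_\sigma(\psi(t))$. Because $\psi$ is $C^1$, the fundamental theorem of calculus gives
\[
\psi(b)-\psi(a)=\int_a^b r\,dt+\int_a^b f_\sigma(\psi)\,dt .
\]
The second integral has norm at most $M_\sigma(b-a)\le 2\varepsilon M_\sigma$. Hence $\int_a^b|r|\,dt\ge \rho-2\varepsilon-2\varepsilon M_\sigma=\rho-2\varepsilon(1+M_\sigma)$. Since the left side is nonnegative, this also holds with $[\cdot]_+$ applied to the right side.

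Finally, Cauchy--Schwarz gives $\bigl(\int_a^b|r|\bigr)^2\le (b-a)\int_a^b|r|^2\le 2\varepsilon\int_{I_\varepsilon}|r|^2$. Squaring the previous bound then yields the claimed inequality. I take $M_\sigma$ to be finite, as the statement implicitly assumes; this holds when $f_\sigma$ is continuous, since $\psi$ is continuous on the compact set $I_\varepsilon$.

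I do not expect a serious obstacle. The only step needing care is the first: both pre- and post-jump points of $\phi$ share the time $t^\star$, so closeness must be invoked separately at each hybrid time. The condition $\varepsilon<\rho/2$ is exactly what forces the two matching times to be distinct and the window $[a,b]$ to be nondegenerate.
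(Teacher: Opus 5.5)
Your proposal is correct and follows essentially the same route as the paper's proof: matching times for the pre- and post-jump states via $\varepsilon$-closeness, the triangle inequality to get a displacement of at least $\rho-2\varepsilon$ over a window of length at most $2\varepsilon$, integrating $\dot\psi=r+f_\sigma(\psi)$ across that window, and then Cauchy--Schwarz. The differences are cosmetic: you bound $\int|r|$ where the paper bounds $|\int r|$, and you spell out the $[\cdot]_+$ step and the finiteness of $M_\sigma$, which the paper leaves implicit.
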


\begin{IEEEproof}
Let $\xi^-$ and $\xi^+$ be the states just before and just after the jump, so that $|\xi^+-\xi^-|=\rho$. Because $\psi$ is $\varepsilon$-close to $\phi$, there are times $s_1$ and $s_2$, both within $\varepsilon$ of $t^\star$, with $|\psi(s_1)-\xi^-|\le\varepsilon$ and $|\psi(s_2)-\xi^+|\le\varepsilon$. By the triangle inequality, $|\psi(s_2)-\psi(s_1)|\ge\rho-2\varepsilon>0$, so $s_1\ne s_2$; also $|s_2-s_1|\le2\varepsilon$. Let $r=\dot\psi-f_\sigma(\psi)$. Integrating $\dot\psi=r+f_\sigma(\psi)$ from $s_1$ to $s_2$ gives $\rho-2\varepsilon\le\bigl|\int_{s_1}^{s_2}r\,dt\bigr|+2\varepsilon M_\sigma$, and, by the Cauchy--Schwarz inequality, $\bigl|\int_{s_1}^{s_2}r\,dt\bigr|^2\le2\varepsilon\int_{I_\varepsilon}|r|^2dt$.
\end{IEEEproof}

As $\varepsilon\to0$, the bound grows to infinity like $\rho^2/(2\varepsilon)$, while $M_\sigma$ stays bounded if $\psi$ does. So the more accurately a single smooth network follows a jump, the larger its physics residual near the jump, and hence its flow loss $L_{\mathrm{flow}}$ in Section~\ref{sec:loss}. HT-PINNs avoid this by using a separate network on each flow interval.

\subsection{One network per flow interval}
We split the time interval $[0,T]$ at the jump times and place one neural network on each
piece, so that each network only has to represent the flow in a single mode. The neural network involves the following quantities:
\begin{itemize}
  \setlength{\itemsep}{2pt}%
  \setlength{\parsep}{0pt}%
  \setlength{\parskip}{0pt}%
  \item[(1)] \emph{Known}: the time interval $[0,T]$, the output map $h$, the maps $f_\sigma$, $\gamma_{\sigma,i}$ and $R_{\sigma,i}$ up to the parameter $\theta$,
        and the transition map $\nu$.
  \item[(2)] \emph{Unknown, chosen by search}: the
        number of jumps $J$, the initial mode $\hat\sigma_0$, and the guards
        $i_1,\dots,i_J$, where $i_j$ is the guard that triggers jump~$j$. We show the selection of these parameters in Section~\ref{sec:structure}. 
  \item[(3)] \emph{Computed, not estimated}: the modes
        $\hat\sigma_1,\dots,\hat\sigma_J$ after the jumps, $\hat\sigma_j=\nu(\hat\sigma_{j-1},i_j)$, since jump~$j$ leaves mode $\hat\sigma_{j-1}$ through guard $i_j$.

  \item[(4)] \emph{Unknown, trained}: the jump times
        $\hat\tau_1,\dots,\hat\tau_J$, the neural networks for 
        $\hat\xi_0,\dots,\hat\xi_J$ (that is, their weights
        $w_0,\dots,w_J$), and the parameter $\theta$.
\end{itemize}
Together with the fixed endpoints $\hat\tau_0=0$ and $\hat\tau_{J+1}=T$,
the jump times are ordered,
$  0=\hat\tau_0<\hat\tau_1<\dots<\hat\tau_J<\hat\tau_{J+1}=T$,
and split $[0,T]$ into the $J+1$ pieces $[\hat\tau_j,\hat\tau_{j+1}]$,
$j=0,\dots,J$. Piece~$j$ covers the trajectory after exactly $j$ jumps:
its mode is $\hat\sigma_j$, and the network $\hat\xi_j$ approximates the
physical variables on it.

To evaluate $\hat\xi_j$ at a time $t$ in piece~$j$, we first convert $t$
into the rescaled time
\begin{equation}
  s=\frac{t-\hat\tau_j}{\hat\tau_{j+1}-\hat\tau_j}\in[0,1],
  \label{eq:rescaled}
\end{equation}
which is the fraction of piece~$j$ that has elapsed at time $t$.
As in switching-time optimization~\cite{XX-PJA:04}, this rescaling fixes the jump times at $s\in\{0,1\}$, so that they enter the loss as parameters.
We then evaluate a neural network $\mathcal N_{w_j}$ with weights
$w_j$ at $s$ as
$  \hat\xi_j(t)=\mathcal N_{w_j}(s)$.
Combining the pieces, the trajectory of the neural network is
\begin{equation}
  \hat\phi(t,j)=\big(\hat\xi_j(t),\hat\sigma_j\big),
  \qquad t\in[\hat\tau_j,\hat\tau_{j+1}],\quad j=0,\dots,J,
  \label{eq:surrogate}
\end{equation}
that is, after $j$ jumps the physical variables are given by $\hat\xi_j$
and the mode is $\hat\sigma_j$. At the jump time $\hat\tau_j$, piece $j-1$
ends and piece $j$ begins. We write
$  \hat\xi_j^-=\hat\xi_{j-1}(\hat\tau_j)=\mathcal N_{w_{j-1}}(1),
  \hat\xi_j^+=\hat\xi_j(\hat\tau_j)=\mathcal N_{w_j}(0)$,
for the physical variables just before and just after jump~$j$, that is,
the last value of the previous mode ($s=1$) and the first value of the
next mode ($s=0$). This is shown in Fig.~\ref{fig:loss}.

\providecommand{\rev}[1]{\textcolor{blue}{#1}}

\subsection{Physics-informed loss}
\label{sec:loss}

The loss penalizes deviations from the data and from the known physics of Section~\ref{sec:physics}:
\begin{equation}
\begin{aligned}
&L_{\mathrm{data}}  = \sum_{k=1}^{K}\bigl|h(\hat\xi_{\hat j_k}(t_k))-y_k\bigr|^2,\\
  & L_{\mathrm{flow}}  = \sum_{j=0}^{J}\int_{T_j}\bigl|\dot{\hat\xi}_j-f_j(\hat\xi_j)\bigr|^2\,dt,\\
&L_{\mathrm{guard}} = \sum_{j=0}^{J}\sum_{i\in I_j}\int_{T_j}\bigl[-\gamma_{j,i}(\hat\xi_j)\bigr]_+^2\,dt
  +\sum_{j=1}^{J}\bigl[\gamma_{j-1,i_j}(\hat\xi_j^-)\bigr]_+^2,\\
  & L_{\mathrm{reset}} = \sum_{j=1}^{J}\bigl|\hat\xi_j^+-R_{j-1,i_j}(\hat\xi_j^-)\bigr|^2,
\end{aligned}
\label{eq:loss-terms}
\end{equation}

\begin{figure}[t]
  \centering
  \resizebox{\linewidth}{!}{%
  \begin{tikzpicture}[
      >=Stealth,
      every node/.style={font=\footnotesize},
      piece/.style={thick, black},
      data/.style={circle, fill=black, draw=white, line width=0.3pt, inner sep=1pt},
      jump/.style={red},
    ]
    \draw[->] (0,0) -- (8.4,0) node[right] {$t$};
    \draw[->] (0,0) -- (0,2.15) node[left] {$\hat\xi$};
    \node[below] at (0,0) {$0$};
    \draw (8.0,0.06) -- (8.0,-0.06);
    \node[below] at (8.0,0) {$T$};
    \foreach \x/\j in {2.8/1, 5.6/2} {
      \draw[jump, dashed] (\x,0) -- (\x,1.95);
      \node[jump, below] at (\x,0) {$\hat\tau_{\j}$};
    }
    \foreach \a/\b/\j in {0.25/2.5/0, 3.1/5.3/1, 5.9/7.75/2}
      \draw[<->] (\a,-0.23) -- (\b,-0.23) node[midway, fill=white, inner sep=1pt] {$\Delta_{\j}$};
    \draw[piece] (0,1.45) .. controls (1.0,1.5) and (2.2,1.1) .. (2.8,0.6)
      node[pos=0.2, data] {} node[pos=0.5, data] {} node[pos=0.75, data] {};
    \draw[piece] (2.8,1.7) .. controls (3.7,1.78) and (4.95,1.25) .. (5.6,0.65)
      node[pos=0.2, data] {} node[pos=0.5, data] {} node[pos=0.78, data] {};
    \draw[piece] (5.6,1.5) .. controls (6.4,1.58) and (7.3,1.3) .. (8.0,1.0)
      node[pos=0.22, data] {} node[pos=0.5, data] {} node[pos=0.8, data] (yk) {};
    \node at (1.15,1.83) {$\mathcal N_{w_0}(s)$};
    \node at (3.95,2.0) {$\mathcal N_{w_1}(s)$};
    \node at (6.75,1.82) {$\mathcal N_{w_2}(s)$};
    \foreach \x/\ylo/\yhi/\j in {2.8/0.6/1.7/1, 5.6/0.65/1.5/2} {
      \draw[jump, ->] (\x+0.06,\ylo+0.07) to[bend right=40] node[right, inner sep=1pt] {$R$} (\x+0.06,\yhi-0.07);
      \draw[jump, fill=white] (\x,\ylo) circle (1.6pt);
      \fill[jump] (\x,\yhi) circle (1.6pt);
      \node[jump, anchor=east, inner sep=1pt] at (\x-0.1,\yhi) {$\hat\xi_{\j}^+$};
      \node[jump, anchor=north east, inner sep=1pt] at (\x-0.12,\ylo-0.06) {$\gamma(\hat\xi_{\j}^-)=0$};
    }
    \draw[dotted] (yk) -- (yk |- 0,0) node[above left, inner sep=1pt] {$t_k$};
  \end{tikzpicture}}
  \caption{Neural network for each region (black), and resets when guards become zero (red).}
  \label{fig:loss}
\end{figure}
\noindent where $T_j=[\hat\tau_j,\hat\tau_{j+1}]$, $I_j=I(\hat\sigma_j)$, $f_j=f_{\hat\sigma_j}(\cdot;\theta)$,
$\gamma_{j,i}=\gamma_{\hat\sigma_j,i}(\cdot;\theta)$ and $R_{j,i}=R_{\hat\sigma_j,i}(\cdot;\theta)$.
The term $L_{\mathrm{data}}$ fits the measurements, $L_{\mathrm{flow}}$ enforces (P1), the
two parts of $L_{\mathrm{guard}}$ enforce (P2) and (P3), and $L_{\mathrm{reset}}$
enforces (P4). The total loss
\begin{equation}
L = L_{\mathrm{data}}+\lambda_1 L_{\mathrm{flow}}+\lambda_2 L_{\mathrm{guard}}
    +\lambda_3 L_{\mathrm{reset}}
\label{eq:loss}
\end{equation} 
is minimized over the weights $w_0,\dots,w_J$, the jump times
$\hat\tau_1,\dots,\hat\tau_J$, and $\theta\in\Theta$, where 
$\lambda_1,\lambda_2,\lambda_3>0$ are constants. The minimizing $\hat\theta$ is our
estimate of $\theta^\star$, and the minimizing $\hat\phi$ supplies the mode,
the jump times, and the unmeasured physical variables.
In practice, each residual in~\eqref{eq:loss-terms} is divided by a scale, the noise level for the data and a prescribed accuracy for the physics, and $L_{\mathrm{data}}$ and the integrals are divided by $K$ and $T$.

\textbf{Learning jump times.} 
With $\Delta_j=\hat\tau_{j+1}-\hat\tau_j$ and $\mathcal N_{w_j}'=d\mathcal N_{w_j}/ds$, the chain rule gives $\dot{\hat\xi}_j=\mathcal N_{w_j}'(s)/\Delta_j$. Therefore,
\begin{equation}
L_{\mathrm{flow}}=\sum_{j=0}^{J}\Delta_j\int_0^1\Big|\frac{\mathcal N_{w_j}'(s)}{\Delta_j}-f_j\big(\mathcal N_{w_j}(s)\big)\Big|^2ds,
\label{eq:flow-rescaled}
\end{equation}
where the jump times $\hat\tau_j$ appear in $\Delta_j$. Each piece $j\in\{0,\cdots,J-1\}$ must follow its flow, and end on its guard. A wrong piece length makes these requirements conflict: if $\hat\tau_j$ is too early, the piece $j-1$ would have to reach its guard faster than its flow allows and if it is too late, the piece $j-1$ would cross its guard before the jump, which $L_{\mathrm{guard}}$ penalizes. 

\subsection{Choosing the number of jumps and the guards}
\label{sec:structure}

The number of jumps $J$, the initial mode $\hat\sigma_0$ and the guards $i_1,\dots,i_J$ are discrete and cannot be trained, so Algorithm~\ref{alg:structure} enumerates them. A choice is admissible if each jump uses a guard of the current mode, $i_j\in I(\hat\sigma_{j-1})$, which gives the next mode $\hat\sigma_j=\nu(\hat\sigma_{j-1},i_j)$. For $J=0,1,\dots$ the neural network is trained for every admissible choice, and the algorithm returns the best choice for the first $J$ at which the loss falls below a tolerance $\varepsilon$; stopping at the smallest such $J$ avoids explaining the data with superfluous jumps. The piece lengths are parameterized as $\Delta=T\text{softmax}(\alpha)$, $\alpha\in \mathbb{R}^{J+1}$, which keeps the jump times inside $(0,T)$. We initialize the jump times by an exact least-squares fit of $J+1$ quadratic segments to the data, by dynamic programming, and train each choice from three starts, two with perturbed jump times. For each $J$, there are at most $|\Sigma|m^J$ admissible choices, where $m=\max_\sigma |I(\sigma)|$.

\begin{algorithm}[t]
  \caption{Hybrid-Time Physics-Informed Neural Network (HT-PINN)}
  \label{alg:structure}
  \KwIn{data $\{(t_k,y_k)\}_{k=1}^{K}$, maximum number of jumps $J_{\max}$, tolerance $\varepsilon$, iterations $M$}
  \KwOut{$J$, $\hat\sigma_0$, $i_1,\dots,i_J$, jump times $\hat\tau_1,\dots,\hat\tau_J$, neural networks $\mathcal N_{w_0},\dots,\mathcal N_{w_J}$, and $\hat\theta$}
  \For{$J=0,1,\dots,J_{\max}$}{
    \ForEach{admissible $(\hat\sigma_0,i_1,\dots,i_J)$}{
      initialize $\mathcal N_{w_0},\dots,\mathcal N_{w_J}$, $\theta$ and $\alpha\in\mathbb{R}^{J+1}$\;
      
      \For{$n=1,\dots,M$}{
      \!\!  $\Delta_j\!\leftarrow\! T\,\mathrm{softmax}(\alpha)_j$,  $\hat\tau_j\!\leftarrow\Delta_0\!+\!\dots\!+\!\Delta_{j\!-\!1}$\;
        
        gradient step on $w_0,\dots,w_J$, $\alpha$ and $\theta$\;
      }
      store the final $L$ with the trained networks, jump times and $\theta$\;
    }
    \If{the smallest $L$ stored for this $J$ is below $\varepsilon$}{
      \KwRet{that choice and its trained variables}\;
    }
  }
  \KwRet{the choice with the smallest $L$}\;
\end{algorithm}

\section{\texorpdfstring{{Numerical Experiments}}{Numerical Experiments}}
\label{sec:experiments}

\begin{figure*}[t]
  \centering
  \includegraphics[width=\textwidth]{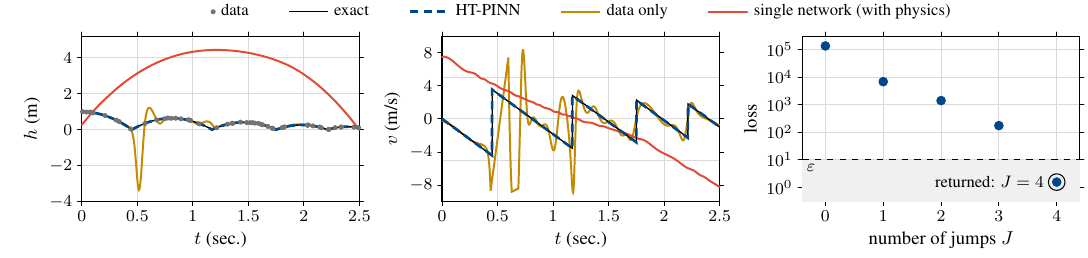}
  \vspace{-7mm}
\caption{Bouncing ball with height over time (left), unmeasured velocity over time (middle), smallest loss for number of jumps $J$ (right). The first $J$ below the tolerance $\varepsilon$ is $J=4$.}
  \label{fig:ball}
\end{figure*}
We test HT-PINNs on two examples. The first is a bouncing ball, whose exact solution is known. The second is a robotic disassembly task, in which a gantry robot pulls a press-fitted pin out of a housing, for which MuJoCo simulation is the reference.\footnote{Code and a video are available at \url{https://github.com/shivanshutripath/dapinn-hybrid-surrogates}.}
Both examples share the same architecture. Each of the $J+1$ pieces is represented by its own network $\mathcal N_{w_j}$ with two hidden layers of $32$ tanh neurons. Each piece has $64$ collocation points at fixed, evenly spaced values of the rescaled time $s$ in~\eqref{eq:rescaled}. We train with Adam ($2000$ steps, learning rate $5\times10^{-3}$), then L-BFGS; the search takes about $4$~min for the ball on one CPU core and $7$~min for the pin on seven.

\subsection{\texorpdfstring{{Bouncing ball}}{Bouncing ball}}
\label{sec:exp_ball}

We consider a hybrid system that models the height and velocity of a bouncing ball as $\xi=(h,v)$, the guard is $\gamma=h$, and the unknown parameter is the restitution coefficient $\theta=e\in\Theta=[0.3,1]$. In the form~\eqref{eq:hs},
\begin{equation}
\mathcal H_\theta^{\mathrm{ball}}:\
\begin{cases}
\dot h=v,\quad \dot v=-g, & h\ge0,\\
h^+=h,\quad v^+=-e\,v, & h\le0,
\end{cases}
\qquad y=h,
\label{eq:ball}
\end{equation}
so the ball falls freely while it is above the ground and, at each impact, its velocity is reversed and scaled by $e$.

In the true solution, the ball is dropped at rest from $h=1$~m, with $e^\star=0.8$ and $g=9.81$~m/s$^2$, and it hits the ground four times in $[0,T]$ with $T=2.5$~s, at $0.452$, $1.174$, $1.752$ and $2.214$~s. We measure height at $50$ random times with Gaussian noise of standard deviation $1$ cm. We use $\lambda_1=\lambda_2=\lambda_3=10$ and $J_{\max}=6$, initialize $e$ at $0.5$. Algorithm~\ref{alg:structure} finds the number of jumps as the smallest $J$ whose loss falls below $\varepsilon=10$. Each network outputs $(h,v)$, and only $h$ is measured.

We compare the proposed neural network model with the results of two different baseline models, each a single network $h(t)$ on $[0,T]$ with $v=\dot h$. The first, which we call the \emph{data-only} model, fits the data alone. The second, called the \emph{single network}, is a PINN that also enforces the physics, with each impact replaced by a stiff penalty contact, $\dot v=-g+k_p[-h]_+-c_pv$ for $h<0$, where $k_p=10^4$~s$^{-2}$ and $c_p$ gives the true $e$, but does not predict the jumps as in Proposition~\ref{prop:smooth}.

We observe that Algorithm~\ref{alg:structure} returns $J=4$ (Fig.~\ref{fig:ball}, right). Each added jump lowers the loss: the loss falls from $1.3\times 10^5$ at $J=0$ to $171$ at $J=3$, and to $1.5$ at $J=4$, which is below the threshold. The HT-PINN estimates $\hat e=0.8003$ and the impact times within $2$~ms; over three noise realizations, it finds $J=4$ each time and $\hat e=0.7994\pm0.0010$ (mean~$\pm$~s.d.). The data-only network fits the samples below the noise level but reaches the ground only once instead of four times, and there it dips more than $3$~m below the ground. The single network with physics cannot satisfy the data and the penalty flow at the same time, and its loss stays above $10^5$ (Fig.~\ref{fig:ball}).

\subsection{{{Robot disassembly in MuJoCo}}}
\label{sec:exp_cover}

\newsavebox{\pinmodes}
\begin{figure*}[t]
  \centering
  \sbox{\pinmodes}{\scalebox{0.9}{%
    \begin{tikzpicture}[font=\footnotesize, line width=0.6pt,
        >={Latex[length=1.8mm, width=1.3mm]},
        mode/.style={draw, rounded corners=1.5pt, align=center, inner sep=2.5pt, minimum width=29mm},
        lab/.style={font=\scriptsize, align=center, inner sep=2pt}]
      \node[mode] (S) at (0,0)
        {\underline{stick ($\mathsf S$)}\\[3pt] $\dot\delta=\dot U-v$\\ $m\dot v=0$};
      \node[mode] (K) at (0,-2.7)
        {\underline{slip ($\mathsf K$)}\\[3pt] $\dot\delta=\dot U-v$\\ $m\dot v=F-mg-\mu_kN$};
      \node[mode, minimum width=23mm, anchor=west] (R) at ([xshift=16mm]K.east)
        {\underline{released ($\mathsf R$)}\\[3pt] $\dot\delta=\dot U-v$\\ $m\dot v=F-mg$};
      \draw[->] ([xshift=-7mm]S.west) -- node[lab, above] {start} (S.west);
      \draw[->] ([xshift=-5mm]S.south) -- node[lab, left]
        {slipping\\ $F-mg\ge\mu_sN$} ([xshift=-5mm]K.north);
      \draw[->] ([xshift=5mm]K.north) -- node[lab, right, align=left]
        {stuck\\ $v\le0$ and $F-mg\le\mu_kN$} ([xshift=5mm]S.south);
      \draw[->] (K.east) -- node[lab, above] {released} node[lab, below] {$U-\delta\ge L$} (R.west);
    \end{tikzpicture}}}%
  {\footnotesize
  \begin{tabular}{@{}c@{\hspace{15mm}}c@{}}
    \begin{tikzpicture}[font=\scriptsize,
        tag/.style={fill=white, draw=black, text=black, line width=0.3pt, rounded corners=1pt, inner sep=1.5pt},
        pointer/.style={white, -{Latex[length=1.6mm, width=1.2mm]}, line width=0.6pt}]
      \node[inner sep=0, anchor=south west] (img) at (0,0)
        {\includegraphics[height=\ht\pinmodes]{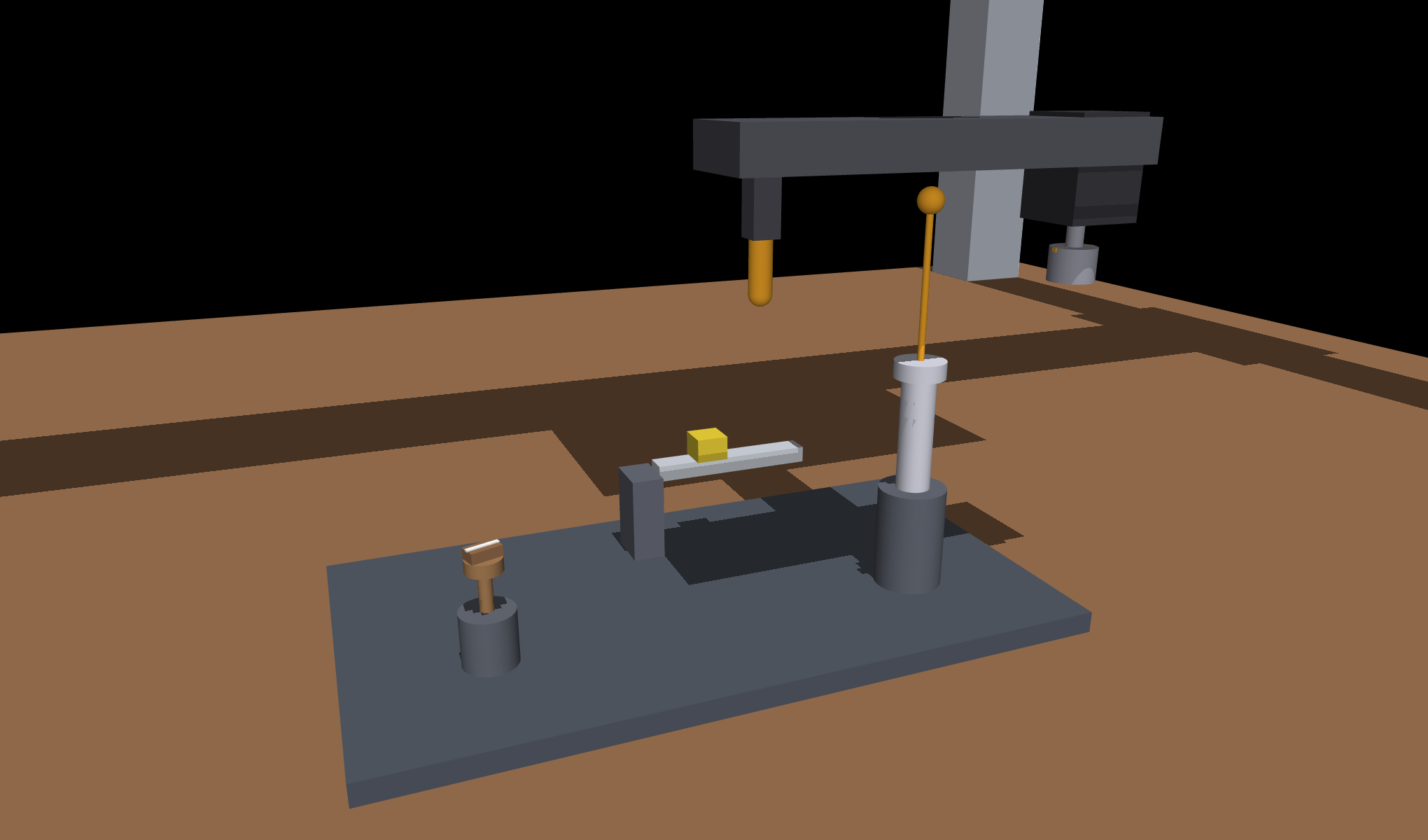}};
      \begin{scope}[x={(img.south east)}, y={(img.north west)}]
        \draw[pointer] (0.30,0.86) node[tag, anchor=east] {gantry robot} -- (0.49,0.83);
        \draw[pointer] (0.84,0.58) node[tag, anchor=west] {spring} -- (0.652,0.62);
        \draw[pointer] (0.84,0.45) node[tag, anchor=west] {pin} -- (0.66,0.50);
        \draw[pointer] (0.84,0.30) node[tag, anchor=west] {housing} -- (0.66,0.35);
        \draw[pointer] (0.36,0.60) node[tag, anchor=east] {latch} -- (0.47,0.47);
        \draw[pointer] (0.14,0.42) node[tag, anchor=east] {screw} -- (0.32,0.34);
      \end{scope}
    \end{tikzpicture}
    & \usebox{\pinmodes}\\[1mm]
    (a) & (b)
  \end{tabular}}
  \caption{{Extraction of a press-fitted pin. (a) The MuJoCo cell: the gantry robot pulls the pin out of the housing through the spring of its series-elastic puller. (b) The hybrid system~\eqref{eq:pin}: the flow of each mode, with $F=k\delta+c\,(\dot U-v)$, and the jumps with the conditions that trigger them. The resets are the identity, and the measured output is $y=\delta$.}}
  \label{fig:pin_setting}
\end{figure*}

We consider a gantry robot that pulls a press-fitted steel pin out of a housing using a spring (Fig.~\ref{fig:pin_setting}a). The pin has mass $m=50$~g and is held over a length $L=19.5$~mm by an interference fit that presses on it with a normal force $N=40$~N, and its coefficients of static and kinetic friction are $\mu_s=0.45$ and $\mu_k=0.30$.  The robot pulls the pin using a spring of stiffness $k=2000$~N/m and damping $c=2$~N\,s/m. The robot is commanded to raise the hook at $20$~mm/s. Because the static friction exceeds the kinetic friction, the pin sticks while the spring stretches. It slips only when the pulling force reaches the static friction, and sticks again when it stops. The number of slips depends on the friction of the returned unit and is not known in advance.

The reference is a MuJoCo simulation of the robot cell with a time step of $0.25$ ms. MuJoCo limits the friction force to $\mu_sN$ while the pin is at rest, to $\mu_kN$ while it slides, and to zero once it has left the fit. In our experiments, the pin breaks away four times at $18.5$~N and the fourth slip carries the pin out of the fit. We consider 3 modes as stuck (S), slipping (K), and released from fit (R), so the sequence of modes is SKSKSKSKR. 

We measure the deflection $\delta$ of the spring, which is the difference between the current length and the unstressed length of the spring. Let $x$ and $v$ be the lift and the velocity of the pin. We take as physical variables $\xi=(\delta,v)$, with the deflection $\delta=U-x$ of the spring, so that the pulling force is $F=k\delta+c(\dot U-v)$. With the time $t$ appended to $\xi$ ($\dot t=1$), the system in the form~\eqref{eq:hs} is given as:
\begin{equation}
\mathcal H_\theta^{\mathrm{pin}}:\
\begin{cases}
\dot\delta=\dot U(t)-v,\quad m\dot v=a_\sigma, & \gamma_{\sigma,i}\ge0,\\[1mm]
\xi^+=\xi,\quad \sigma^+=\nu(\sigma,i), & \gamma_{\sigma,i}\le0,
\end{cases}
\qquad y=\delta,
\label{eq:pin}
\end{equation}
with the net forces $a_{\mathsf S}=0$, $a_{\mathsf K}=F-mg-\mu_kN$ and $a_{\mathsf R}=F-mg$, and the guards and transitions are given as (Fig.~\ref{fig:pin_setting}b):
\[
\begin{array}{lc}
{\mathsf S\to\mathsf K:} & \gamma_{\mathsf S,1}=\mu_sN-(F-mg),\\
{\mathsf K\to\mathsf S:} & ~~~~~~~~~~~\gamma_{\mathsf K,1}=\max\{v,\ F-mg-\mu_kN\} ,\\
{ \mathsf K\to\mathsf R:} &\!\!\!\! \gamma_{\mathsf K,2}=L-(U(t)-\delta) .
\end{array}
\]
A slip returns to stuck only when the pin has stopped and is still being slowed by friction, that is, when $v=0$ and $F-mg\le\mu_kN$ hold together. The arrest guard is the larger of these two quantities, so it is nonpositive only when both are. The same slip can instead carry the pin out of the fit: its lift is $U(t)-\delta$, and it exits when this lift reaches the fit length $L$. After the exit the pin is free, so mode $\mathsf R$ has no further event. 

The hook motion $U(t)$, recorded from the robot, is known. The unknowns are the friction coefficients $\theta=(\mu_s,\mu_k)\in\Theta=[0.2,0.8]\times[0.1,0.6]$. We measure the deflection $\delta$ for $T=1.48$~s from the start of the pull, sampled at $1$~kHz with Gaussian noise of standard deviation $25~\mu$m. We use $\lambda_1=\lambda_2=\lambda_3=1$, $J_{\max}=9$, $\varepsilon=10$ and $\hat\sigma_0=\mathsf S$, since the pin is at rest at the start, and initialize $\theta$ at $(0.37,0.24)$.

\begin{figure*}[t]
  \centering
  \includegraphics[width=\textwidth]{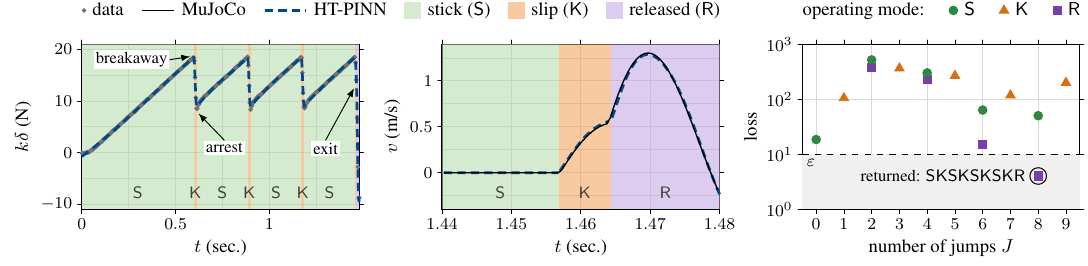}
  \vspace{-4mm}
  \caption{{Extraction of a press-fitted pin in MuJoCo. Left: measured spring force $k\delta$; the shading shows the modes of the HT-PINN, stick (green), slip (orange) and released (purple). Middle: unmeasured velocity of the pin during the last slip and the exit. Right: smallest loss for each $J$, shaded below the tolerance $\varepsilon$.}}
  \label{fig:pin}
\end{figure*}

\begin{figure}[t]
  \centering
  \includegraphics[width=0.5\textwidth]{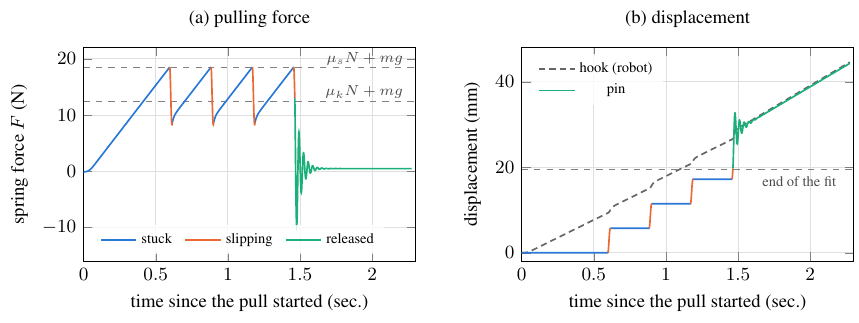}
  \caption{Force and displacement during the extraction of the press-fitted pin in MuJoCo, colored by the state of the pin: stuck ($\mathsf S$), slipping ($\mathsf K$) and released ($\mathsf R$). (a) Spring force of the puller; the dashed lines are the static and kinetic friction levels, offset by the weight of the pin. (b) Displacement of the hook, which the robot raises at a constant $20$~mm/s, and of the pin. }
  \label{fig:pin}
\end{figure}

\begin{table*}[t]
  \centering
  {
  \caption{The table shows the event times and friction coefficients ($\mu_s,\mu_k$) for MuJoCo simulator (true values) and HT-PINNs (our approach). Each $\tau_j$ is marked with the mode it enters: $\mathsf S$ (stick), $\mathsf K$ (slip), $\mathsf R$ (released).}
  \label{tab:pin}
  \footnotesize
  \setlength{\tabcolsep}{3.5pt}
  \begin{tabular}{lcccccccccc}\hline
  & $\tau_1$ ($\mathsf K$) & $\tau_2$ ($\mathsf S$) & $\tau_3$ ($\mathsf K$) & $\tau_4$ ($\mathsf S$) & $\tau_5$ ($\mathsf K$) & $\tau_6$ ($\mathsf S$) & $\tau_7$ ($\mathsf K$) & $\tau_8$ ($\mathsf R$) & $\mu_s$ & $\mu_k$\\ \hline
  MuJoCo & 0.5975 & 0.6145 & 0.8840 & 0.9010 & 1.1705 & 1.1875 & 1.4570 & 1.4640 & 0.4500 & 0.3000\\
  HT-PINN & 0.5967 & 0.6141 & 0.8845 & 0.9008 & 1.1697 & 1.1864 & 1.4568 & 1.4644 & 0.4495 & 0.3052\\ \hline
  \end{tabular}
  }
\end{table*}

Algorithm~\ref{alg:structure} returns $J=8$ jumps, the same as in the simulation as shown in Fig.~\ref{fig:pin} and Table~\ref{tab:pin}. At every even $J$, the candidate whose last slip ends in an exit has a lower loss than the one that ends in an arrest, and $J=8$ is the first $J$ whose loss, $4.1$, is below the tolerance as shown in Fig.~\ref{fig:pin} (right). The loss does not decrease steadily with $J$, because the candidates in between assume a wrong sequence of events. The single network, $J=0$, has a loss of only $18.6$ but estimates $\mu_s=0.75$: its $64$ collocation points, $23$~ms apart, do not resolve the $17$-ms slips. The static and kinetic friction coefficients are estimated within $0.1\%$ and $1.7\%$, and all eight event times within $1.1$~ms. The relative $L^2$ error is $1.1\%$ for the measured deflection and $7.5\%$ for the unmeasured velocity of the pin, shown in Fig.~\ref{fig:pin} (middle).
The estimated $\mu_s$ also gives the force $\mu_sN+mg$ needed to break the pin loose, which is what the robot needs to plan the extraction.

\section{Conclusion}
\label{sec:conc}
In this letter we study the modeling of hybrid systems whose mode and jump times are not measured. We propose HT-PINNs, which assign one physics-informed network to each flow interval, train the networks jointly with the jump times and the parameters, and select the number of jumps and the sequence of modes by enumeration. Further, we show that a single smooth network cannot follow a jump closely without a large physics residual near it. Numerical examples on a bouncing ball and on pin extraction in MuJoCo demonstrate that HT-PINNs recover the number of jumps and the parameters, whereas a single PINN and data-driven model cannot capture the jumps.
Several research directions remain open. These include validating HT-PINNs on a physical robot, and reducing the enumeration, which can grow exponentially with the number of jumps.


\bibliographystyle{IEEEtran}
\bibliography{refs}

\end{document}